\newtheorem{prop}{\bf Proposition}[section]
\newtheorem{lemma}{\bf Lemma}[section]
\numberwithin{equation}{section}
\title{Path integral quantization of the relativistic Hopfield model}
\author{F. Belgiorno$^{1,2}$ \and S.L. Cacciatori$^{3,4}$ \and F. Dalla Piazza$^{5}$ \and M. Doronzo$^{3}$}
\address{\noindent $^1$Dipartimento di Matematica, Politecnico di Milano, Piazza Leonardo 32, IT-20133 Milano, Italy\endgraf
$^2$INdAM-GNFM \endgraf
$^3$Department of Science and High Technology, Universit\`a dell'Insubria, Via Valleggio 11, IT-22100 Como, Italy\endgraf
$^4$INFN sezione di Milano, via Celoria 16, IT-20133 Milano, Italy\endgraf
$^5$Universit\`a ``La Sapienza'', Dipartimento di Matematica, Piazzale A. Moro 2, I-00185, Roma, Italy}
\begin{document}
\maketitle
\begin{abstract}
The path integral quantization method is applied to 
a relativistically covariant version of the Hopfield model, 
which represents a very interesting mesoscopic framework for the description of the interaction between quantum light and dielectric quantum matter, with particular reference to the context of analogue gravity.  In order to take into account the constraints occurring in the 
model, we adopt the Faddeev-Jackiw approach to constrained quantization in the path integral formalism.
In particular we demonstrate that the propagator obtained with the Faddeev-Jackiw approach is equivalent to the one 
 which, in the framework of Dirac canonical quantization for constrained systems, can be directly computed as the vacuum expectation value of the time ordered product of the fields.   Our analysis also provides an explicit example of quantization of the electromagnetic field in a covariant gauge and coupled with the polarization field, which is a novel contribution to the literature on the Faddeev-Jackiw procedure.
\end{abstract}

\section{Introduction}
 In the context of the field represented by the interaction of the quantum electromagnetic field with quantum matter
 two different approaches can be adopted: the first one involves a microscopic description of the field and the second one considers a more phenomenological approach, in which some 
microscopic interactions are described by means of effective fields.
An example of this kind of approach is provided by models describing interactions of the 
electromagnetic field with dielectric media, 

which, beyond more standard applications to light-matter interactions, can be extended also to describe pair creation induced by an external field, by moving boundaries or by inhomogeneities propagating in the medium. Interest in this framework has been recently risen up, due to the attempt to 
reproduce quantum emission by black hole in the lab by means of analogous systems, i.e. systems displaying the same kinematics which is at the root of the 
Hawking effect \cite{hawking-cmp,unruh-seminal,barcelo,philbin}.\\ 
With respect to the phenomenological quantization of the electromagnetic field in the presence of a dielectric medium (as e.g. in \cite{luks}), the Hopfield model \cite{Hopfield1958} is able to describe the observed behaviour of the electromagnetic field in a class of transparent dielectric media by means of a very simple modeling of the matter itself, which is described as a collection of independent oscillators responsible for the dispersive properties of the electromagnetic field in matter \cite{fano,kittel,davydov}. 
These matter field degrees of freedom are represented by means of mesoscopic  polarization fields. Despite its simplicity, the model is able to reproduce 
the Sellmeier dispersion relations, which are fundamental features of light interactions in dielectric media. Still, in its original form the Hopfield model is not able to provide a description involving explicit relativistic covariance, which appears to be a fundamental request 
as far as one is interested in the analogue Hawking effect, as well as other perturbative and non perturbative effects in which an inhomogeneity propagates through a homogeneous background.
Indeed, the description of these phenomena requires the skill to move from an inertial frame to another one, for example from the lab frame to the frame comoving with the perturbation. 
A relativistic covariant version of the Hopfield model has been developed in \cite{PhysicaScripta}, together with its covariant and gauge invariant quantization. Therein, the quantization of a constrained system was taken into account by means of the Dirac quantization scheme, and states were constructed in the interaction picture. The Dirac approach for constrained systems is probably the most widely used \cite{dirac-book,dirac-papers,sundermeyer,Gitman-Tyutin,Henneaux-Teitelboim,Rothe}; it requires the identification and the classification of all the constraints of the theory into two classes, together with the redefinition of new brackets, the so called Dirac brackets. 
In \cite{EPJD2014} the quantization of the covariant Hopfield model was performed in the lab frame in a simple fixed gauge, in order to study photon production originated by time-dependent perturbations. While the analogue Hawking effect, beyond the analysis performed in 
\cite{NJP2011,petev,finazzi-carusotto-pra,finazzi-carusotto-pra14}, has been also analitically studied as a non perturbative effect in a simplified model in \cite{PRD2015}. The exact quantization of the model in absence of dielectric perturbations and 
in a generic covariant gauge, and the construction of the state in the Heisenberg representation have been implemented in \cite{Quantum}, where the mathematical issues and formal problems of the construction are discussed in detail.\\

An alternative and  powerful way to quantize a theory is provided by the path integral approach. In this paper we apply this quantization scheme to the relativistic covariant Hopfield model. Since the covariant Hopfield model is quadratic in the fields, it is essentially characterized by the propagator, which in the path integral quantization 
is obtained in a simpler manner than in the canonical quantization adopted in \cite{Quantum}. A difficulty of the model is given by the presence of constraints (such as the transversality condition for the  polarization  field), which impose 
 to adopt a quantization method for constrained theories. 
The extension of the path integral method to Dirac's theory of constrained systems was presented in \cite{Faddeev70}, 
with particular reference to first class constraints, and in \cite{Senjanovic76} 
with an analysis of  second class constraints. An alternative to the Dirac's method is represented by the Faddeev-Jackiw procedure \cite{Faddeev-Jackiw,Jackiw93}, which, with an elegant analysis, leads to the correct quantization without the necessity of the full Dirac machinery. 
In \cite{barcelos-neto,wotzasek,garcia,Liao} (see also \cite{Rothe}, section 4.4 therein) and, in the recent paper \cite{Toms} the implementation of the Faddeev-Jackiw method within the path integral approach is discussed.  We apply the Faddeev-Jackiw quantization scheme to the covariant version 
of the Hopfield model. Our aim is two-fold: on one hand, we find that 
the Faddeev-Jackiw procedure is simpler than the Dirac one. Indeed, applying this procedure to the Hopfield model, just one additional field appears in the quantization (see Section \ref{sec:path integral}), whereas three additional fields appear using the Dirac's method  (c.f. \cite{PhysicaScripta}, Section 4). Moreover, we obtain the same constraint, see Equation \eqref{cnstr}, as in the Dirac's procedure.
On the other hand, we also obtain an explicit example of quantization of the electromagnetic field in a covariant gauge, with a further 
difficulty represented by the coupling with the polarization field, which is missing in the previous literature on the   Faddeev-Jackiw procedure.\\
We point out that quantization of the original Hopfield model in non-covariant gauges and in presence of dissipation 
is performed in \cite{bechler1,bechler2}. In contrast, in our model there is no dissipation (which is a good 
approximation as far as phenomena occurring far from the absorptive regions are taken into account), but full 
relativistic covariance is accounted for.

The structure of the paper is the following. In Section 2.1 we will employ the Fadeev-Jackiw path integral procedure to quantize the covariant Hopfield model which possesses a singular Lagrangian; in Section 2.2 we will determine the exact propagator, discussing the right Feynman-St\"uckelberg prescription to be adopted; 
in Section 3 and in the appendices, we will demonstrate that the propagator obtained in this cleaner way is exactly equivalent to the one directly computed as the vacuum expectation value of the time ordered product of the fields \cite{Quantum}. \\ 
\\ 
We conclude with some comments on notations: we shall use the symbol $\pmb v$ for a space-time vector having components $v^\mu$ with $\mu=0,\ldots,3$, whereas its spatial component will be indicated with $\vec{v}$ or, more explicitly, $v^i$ with $i=1,2,3$. We shall use $v^2$ for the scalar $\pmb v^2 = \pmb v \cdot \pmb v$. Moreover, the Minkowski metric tensor is chosen with the signature $(+,-,-,-)$ and we take the speed of light $c=1$.

\section{Path integral formulation of the Hopfield model}
In place of using the standard Dirac method for quantizing a theory with constraints, we will implement the Faddeev-Jackiw method, which is based on recasting the Lagrangian in a first order formalism. We will refer to \cite{Toms}. 
 For the sake of completeness, we recall the essential ingredients characterizing the method, referring to the 
previously quoted literature for details. To a standard second order Lagrangian $L(q,\dot{q})$ we can associate a ``symplectic'' 
Lagrangian $L=\omega_\alpha (\xi) \dot{\xi}^\alpha -V(\xi)$ which is first order in time derivative $\dot{\xi}$, where $\xi=(q,p)$.  The first term in $L$ defines the so-called 
canonical one-form $\omega:=\omega_\alpha (\xi) d\xi^\alpha$. The second term $V$ can be identified with the Hamiltonian. 
The Euler-Lagrange equations become 
\begin{equation}
\Omega_{\alpha \beta}  \dot{\xi}^\beta = \frac{\partial V}{\partial \xi^\alpha},
\label{e-moto}
\end{equation}
where 
\begin{equation}
\Omega_{\alpha \beta}: = \frac{\partial}{\partial \xi^{\alpha}}\omega_{{\beta}}-\frac{\partial}
{\partial \xi^{\beta}}\omega_{{\alpha}},
\end{equation}
is the antisymmetric two-form associated with $L$. If $\det (\Omega)\not =0$, there is no constraint in the theory and 
quantization can proceed along the usual procedure. If instead $\det (\Omega)=0$, null  eigenvectors  $z_I$ of $\Omega$, such that 
\begin{equation}
z_I^\alpha \Omega_{\alpha \beta}= 0,
\end{equation}
are present, and their contraction with the equations of motion (\ref{e-moto}) 
\begin{equation}
\Lambda_I := z_I^\alpha  \frac{\partial V}{\partial \xi^\alpha}= 0,
\end{equation}
correspond to constraints $\Lambda_I =0$ of the theory. These constraints are then added to $L$ by introducing 
suitable Lagrange multipliers $y^I$, obtaining a new Lagrangian $L'$ in the extended canonical variables 
$(\xi^\alpha,y^I)$:
\begin{equation}
L':= L+y^I \Lambda_I. 
\end{equation}
A new canonical one-form $\omega'$ and a new two-form $\Omega'$ are obtained. If $\det (\Omega')\not =0$, the 
procedure ends; if not, the procedure is iterated till a nonsingular two-form is obtained by adding to the Lagrangian a suitable 
set of constraints obtained as above.\\
In the path-integral approach, the measure is chosen according to the prescription in \cite{Toms}.

\subsection{The Fadeev-Jackiw approach to path integral quantization}\label{sec:path integral}
In order to apply the Faddeev-Jackiw method, let us first note that the (auxiliary) field $B$ appears already at first order in the equations of motion, so we need only to introduce the momenta of the electromagnetic $\pmb A$ and of the polarization $\pmb P$ fields. Indeed,  the Lagrangian for the covariant relativistic Hopfield model for a single polarization field with resonance frequency $\omega_0$ is \cite{PhysicaScripta}:
\begin{align}
\mathcal{L}=&-\frac1{16\pi}F_{\mu \nu}F^{\mu \nu} - \frac1{2\chi \omega_0^2}[(v^{\rho}\partial_{\rho}P_{\mu})(v^{\rho}\partial_{\rho}P^{\mu})]+\frac1{2\chi}P_{\mu}P^{\mu} - \frac{g}{2}(v_{\mu}P_{\nu}-v_{\nu}P_{\mu})F^{\mu \nu} \cr
& + B(\partial_{\mu}A^{\mu}) + \frac{\xi}{2}B^2 + \lambda (v_{\mu}P^{\mu}),
\label{eq:Lc}
\end{align}
{where $\chi$ is the susceptibility, $g$ a coupling constant, $\xi$ the gauge fixing constant and $\lambda$ a lagrangian multiplier to impose the transversality condition for the polarization field.
In the first order formalism the Lagrangian reads:
\begin{align}
\label{lagfirstorder}
\mathcal{L}(\left\lbrace \pmb X \right\rbrace,\left\lbrace \pmb \Pi \right\rbrace)= \Pi_{\pmb A}^{\mu} \dot{A}_{\mu} + \Pi_{\pmb P}^{\mu} \dot{P}_{\mu} + \lambda (v_{\mu}P^{\mu}) - \mathcal{H},
\end{align}
with
\begin{align}
\mathcal{H}&= 2\pi \vec \Pi^2_{\pmb A} +\frac 1{16\pi} \sum_{i,j=1}^3 F_{ij}F^{ij} -A_0\sum_{i=1}^3 \partial_{x^i} \Pi^i_{\pmb A} -4\pi g \sum_{i=1}^3 (v_0 P_i-v_iP_0)\Pi_{\pmb A}^i-\sum_{j=1}^3 \frac {v^j}{v^0} \partial_{x^j} 
\pmb P\cdot  \pmb \Pi_{\pmb P} \\
& \quad -\frac {\chi_0 \omega_0^2}{2v_0^2} \pmb{\Pi_P} \cdot \pmb {\Pi_P}-\frac 1{2\chi_0} \pmb P\cdot \pmb P+2\pi g^2 (v_0 \vec P-\vec v P_0)^2+ g \sum_{i,j=1}^3 v_iP_jF^{ij} -B\sum_{i=1}^3 \partial_{x^i}A^i -\frac \xi2 B^2,
\end{align}
and $\left\lbrace \pmb X \right\rbrace$, $\left\lbrace \pmb \Pi \right\rbrace$ stand for the collection of the fields and the momenta respectively. 
We can slightly generalise the procedure explained in \cite{Toms} by treating the field $B$ apart, so that we introduce the canonical variables
\begin{align}
\xi^{\alpha}=\left( \Pi_{\pmb A}^{\mu};A_{\mu};\Pi_{\pmb P}^{\mu};P_{\mu};\lambda \right),
\end{align}
and the components of the canonical one-form $\omega$,  as can be read from \eqref{lagfirstorder}, are:
\begin{align}
\omega_{\Pi_{\pmb A}^{\mu}}&=0, \cr
\omega_{A_{\mu}}&=\Pi_{\pmb A}^{\mu}, \cr
\omega_{\Pi_{\pmb P}^{\mu}}&=0, \cr
\omega_{P_{\mu}}&=\Pi_{\pmb P}^{\mu}, \cr
\omega_{\lambda}&=v_{\mu}P^{\mu}.
\end{align}
If we choose $\xi^{\alpha}$ and $\xi^{\beta}$ with all the fields evaluated respectively at some spatial coordinate $\vec{x}$ and $\vec{x}'$, the symplectic two-form $\Omega_{\xi^{\alpha} \xi^{\beta}}=\frac{\delta}{\delta \xi^{\alpha}}\omega_{\xi^{\beta}}-\frac{\delta}
{\delta \xi^{\beta}}\omega_{\xi^{\alpha}}$ will take the form
\begin{align}
\Omega_{\xi^{\alpha} \xi^{\beta}}=
\left( 
\begin{matrix} 
0 & \delta_{\nu}^{\mu} & 0 & 0 & 0 \\ 
-\delta_{\mu}^{\nu} & 0 & 0 & 0 & 0 \\
0 & 0 & 0 & \delta_{\nu}^{\mu} & 0 \\ 
0 & 0 & -\delta_{\mu}^{\nu} & 0 & v^{\nu} \\
0 & 0 & 0 & -v^{\mu} & 0 \\
 \end{matrix} \right) \delta (\vec{x},\vec{x}').
\end{align}
Since the canonical variables are odd, we expect an odd number of missing constraints. Indeed, the first order Lagrangian is singular, i.e. $\det{(\Omega_{\xi^{\alpha} \xi^{\beta}})}=0$, with kernel of dimension one, generated by 
$z\equiv(0^{\mu};0_{\mu};v^\mu;0_{\mu};1)$. This mode is associated to a new constraint $\Lambda$: 
\begin{align}
\label{cnstr}
0=\Lambda=z^{\alpha}\frac{\delta}{\delta \xi^{\alpha}}\mathcal{H}=v^{\mu}\frac{\delta}{\delta \Pi_{\pmb P}^{\mu}}\mathcal{H}=\frac{v^{\mu}}{v_0}\left( v^k \partial_k P_{\mu} + \frac{\chi \omega_0^2}{v_0}\Pi_{\pmb P \mu} \right).
\end{align}
Inserting it into the Lagrangian in the first order formalism it yields 
\begin{align}
\mathcal{L}(\left\lbrace \pmb X \right\rbrace,\left\lbrace \pmb \Pi \right\rbrace)= \Pi_{\pmb A}^{\mu} \dot{A}_{\mu} + \Pi_{\pmb P}^{\mu} \dot{P}_{\mu} + \lambda (v_{\mu}P^{\mu}) +y\left[ \frac{v^{\mu}}{v_0}\left( v^k \partial_k P_{\mu} 
+ \frac{\chi \omega_0^2}{v_0}\Pi_{\pmb P \mu} \right) \right]  - \mathcal{H}.
\label{eq:L}
\end{align}
Thus, we must extend the canonical variables to
\begin{align}
\xi^{\alpha}=\left( \Pi_{\pmb A}^{\mu};A_{\mu};\Pi_{\pmb P}^{\mu};P_{\mu};\lambda;y \right),
\end{align}
and add the conjugate momentum
\begin{align}
\omega_y=\frac{v^{\mu}}{v_0}\left( v^k \partial_k P_{\mu} + \frac{\chi \omega_0^2}{v_0}\Pi_{\pmb P \mu} \right).
\end{align}
With the addition of the $y$ field, the symplectic two-form becomes
\begin{align}
\Omega_{\xi^{\alpha} \xi^{\beta}}=
\left( 
\begin{matrix} 
0 & \delta_{\nu}^{\mu} & 0 & 0 & 0 & 0\\ 
-\delta_{\nu}^{\mu} & 0 & 0 & 0 & 0 & 0\\
0 & 0 & 0 & \delta_{\nu}^{\mu} & 0 & \frac{\chi \omega_0^2}{v_0^2}v_{\nu}\\ 
0 & 0 & -\delta_{\nu}^{\mu} & 0 & v^{\nu} & \frac{v^{\nu}}{v_0}v^k \partial_k\\
0 & 0 & 0 & -v^{\mu} & 0 & 0\\
0 & 0 & -\frac{\chi \omega_0^2}{v_0^2}v_{\nu} & -\frac{1}{v_0}v^{\nu}v^k \partial_k & 0 & 0
 \end{matrix} \right) \delta (\vec{x},\vec{x}'),
\end{align}
which has non-singular determinant
\begin{align}
\left( \det{(\Omega_{\xi^{\alpha} \xi^{\beta}})} \right)^{1/2}=\det \left[ \frac{\chi \omega_0^2}{2 v_0^2} \delta (\vec{x},\vec{x}') \right].
\end{align}
Following Toms \cite{Toms}, we can proceed with the path integral approach and the path integral measure is:
\begin{align}
d\mu=\left( \prod_{\alpha} D\xi^{\alpha} \right) DB (\det{\Omega})^{1/2}.
\end{align}
Hence for the partition function we get
\begin{align}
Z_0= \pmb \int \left[ D \pmb A D \pmb {\Pi_A} D \pmb P D \pmb{\Pi_P} DB D\lambda Dy \right] (\det{\Omega})^{1/2} \exp{\left\lbrace i \int \mathcal{L}(\left\lbrace \pmb X \right\rbrace,\left\lbrace 
\pmb \Pi \right\rbrace)d^4\pmb x \right\rbrace },
\end{align}
with $\mathcal{L}$ given by (\ref{eq:L}). 
In order to recover the standard configuration space path integral we have to integrate over all momenta and the multiplicator fields. The integration over $\pmb{\Pi_A}$ is immediate, while integration\footnote{It's a Gaussian integral.} over $\pmb{\Pi_P}$ gives the contribution
\begin{align}
\pmb \int \prod_{\mu} \left[ D \Pi_{\pmb P}^{\mu} \right] \exp{\left\lbrace i \int \left[ \frac{\chi \omega_0^2}{2v_0^2}\Pi_{\pmb P \mu}\Pi_{\pmb P}^{\mu} + \left( \dot{P}_{\mu} + y \frac{\chi \omega_0^2}{v_0^2}v_{\mu} + 
\frac{v^k}{v_0}\partial_k P_{\mu} \right)\Pi_{\pmb P}^{\mu}  \right] d^4\pmb x 
\right\rbrace } = \left(\frac{\pi}{a}\right)^2 \exp{\left(-\frac{b_{\mu}b^{\mu}}{4a}\right)},
\end{align}
where $a=\frac{\chi \omega_0^2}{2v_0^2}$ and $b_{\mu}=\dot{P}_{\mu} + y \frac{\chi \omega_0^2}{v_0^2}v_{\mu} + \frac{v^k}{v_0}\partial_k P_{\mu}$. Similarly, integration over $y$ gives the contribution 
\begin{align}
\frac{v_0}{\omega_0}\sqrt{\frac{2\pi}{\chi}}\exp{\left(\frac{v_0^2}{2\chi \omega_0^2}(v_{\mu}\dot{P}^{\mu})^2\right)},
\end{align}
so that 
\begin{align}
Z_0=N \pmb \int \left[ D \pmb A D \pmb P DB D\lambda \right] \exp{\left\lbrace i \int \left[ \mathcal{L}(\pmb A,\pmb P,B,\lambda) + \frac{1}{2\chi \omega_0^2}(v_{\rho}v^0\partial_0 P^{\rho})(v_{\sigma}v^0\partial_0 
P^{\sigma}) \right]d^4\pmb x \right\rbrace },
\end{align}
where $N$ is a normalisation constant. Integration over $\lambda$ gives a factor $\delta(\pmb v \cdot \pmb P)$, so that finally we get
\begin{align}
Z_0=N  \pmb \int \left[ D \pmb A D \pmb P DB \right] \delta(\pmb v \cdot \pmb P) \exp{\left\lbrace i \int  \mathcal{L}(\pmb A,\pmb P,B) d^4\pmb x \right\rbrace }.
\end{align}
Obviously the normalisation must be such that $Z_0=1$.
\subsection{Determination of the propagator}
After introducing the currents $\pmb {J_A}, \pmb {J_P}$ and $J_B$, we can define the functional generating the propagators:
\begin{align}
Z[\pmb {J_A}, \pmb {J_P}, J_B]=\pmb \int [D\pmb A D \pmb P D B] &  \delta(\pmb v \cdot \pmb P) \exp \left\{ i \int_{\mathbb R^4} \mathcal L(\pmb A,\pmb P,B) d^4 \pmb x + i \int_{\mathbb R^4} J_{\pmb A}^\mu A_\mu d^4 \pmb x
+ i \int_{\mathbb R^4} J_{\pmb P}^\mu P_\mu d^4 \pmb x+ \right. \cr
& \qquad\ \left.  i \int_{\mathbb R^4} J_{B} B d^4 \pmb x 
\right\}.
\end{align}
In order to compute it, we rewrite the delta function in terms of the integration over the field $\lambda$, by restoring the Lagrangian $\mathcal{L} \equiv \mathcal{L}(\pmb A,\pmb P,B,\lambda)$ of (\ref{eq:Lc}). Moreover, we introduce the current $J_\lambda$,
which will be set to zero at the end of the calculation, in order to simplify some technical step. Thus, we consider the generating functional
\begin{align}
Z[\pmb {J_A}, \pmb {J_P}, J_B, J_\lambda]=\pmb \int [D\pmb A D \pmb P D B D \lambda] & \exp \left\{ i \int_{\mathbb R^4} \mathcal{L}(\pmb A,\pmb P,B,\lambda) d^4 \pmb x + i \int_{\mathbb R^4} J_{\pmb A}^\mu A_\mu d^4 \pmb x
+ i \int_{\mathbb R^4} J_{\pmb P}^\mu P_\mu d^4 \pmb x+ \right. \cr
& \qquad\ \left.  i \int_{\mathbb R^4} J_{B} B d^4 \pmb x + i \int_{\mathbb R^4} J_{\lambda} \lambda d^4 \pmb x
\right\}.
\end{align}
After passing to the Fourier transforms of the fields and the currents, we proceed in the usual way
\begin{align}
Z[\pmb {J_A}, \pmb {J_P}, J_B, J_\lambda]=\pmb \int [D\pmb {\mathcal A} D \pmb {\mathcal P} D \mathcal B D \tilde \lambda] & \exp \left\{ -\frac i2 \int_{\mathbb R^4} \frac {d^4\pmb k}{(2\pi)^4} \tilde{\pmb V}(-\pmb k) 
{\mathcal M}_{i\varepsilon} (\pmb k) \tilde {\pmb V}(\pmb k) \right. \cr
& \left.+ \frac i2 \int_{\mathbb R^4} \frac {d^4 \pmb k}{(2\pi)^4} \left[ \tilde {\pmb J}_{\pmb V} (-\pmb k)^T \tilde {\pmb V}(\pmb k) +  \tilde {\pmb V}(-\pmb k)^T \tilde {\pmb J}_{\pmb V} (\pmb k) \right]   
\right\}, \label{Z}
\end{align}
where 
\begin{eqnarray}
\tilde {\pmb V}=\begin{pmatrix}
\pmb {\mathcal A} \\ \pmb {\mathcal P} \\ \mathcal B\\ \tilde \lambda
\end{pmatrix},
\end{eqnarray}
and
\begin{eqnarray}
\tilde {\pmb J}_{\pmb V}=
\begin{pmatrix}
\tilde {\pmb J}_{\pmb A} \\ \tilde {\pmb J}_{\pmb P}\\ \tilde J_B\\ \tilde J_\lambda
\end{pmatrix},
\end{eqnarray}
are the Fourier transform of the fields and the currents respectively, and we have introduced the Feynman-St\"uckelberg prescription, to be correctly identified, associated to the Fourier space operator
\begin{align}
{\mathcal M} \tilde {\pmb V} \equiv 
\begin{pmatrix}
\frac 1{4\pi} (k^2\mathbb I-\pmb k \pmb k^t) & ig (\omega \mathbb I -\pmb v \pmb k^t ) & -i\pmb k & \pmb 0 \\
-ig (\omega \mathbb I -\pmb k \pmb v^t ) & \frac 1{\chi} \left( \frac {\omega^2}{\omega_0^2}-1 \right)\mathbb I & \pmb 0 & \pmb v \\
i\pmb k^t & \pmb 0^t & -\xi & 0 \\
\pmb 0^t & \pmb v^t & 0 & 0
\end{pmatrix}
\begin{pmatrix}
\pmb{\mathcal A} \\ \pmb {\mathcal P} \\ \mathcal B\\ \tilde \lambda
\end{pmatrix}
=
\begin{pmatrix}
0 \\ 0 \\ 0\\ 0
\end{pmatrix},
\end{align}
where $\omega:=k^\mu v_\mu$, $\pmb k$ is a column vector, $\pmb k^t$ a row vector, and so on. The determinant of this matrix is 
\begin{eqnarray}
\det  {\mathcal M}=-\frac {(k^2)^2}{4\pi \omega_0^6\chi^3} \left( \omega^2-\omega_0^2 \right)^2 \left[\frac {k^2}{4\pi} -\frac {g^2\chi \omega_0^2 \omega^2}{\omega^2-\omega_0^2} \right]^2
(\omega^2-\bar\omega^2),
\end{eqnarray}
so that $\mathcal M^{-1}$ is singular at $k^2=0$, and\footnote{Note that the determinant does not vanish when $\omega^2=\omega_0^2$.}
\begin{eqnarray}
\frac {k^2}{4\pi} -\frac {g^2\chi_0 \omega_0^2 \omega^2}{\omega^2-\omega_0^2}=0, \qquad\ \omega^2-\bar\omega^2=0,
\end{eqnarray}
thus requiring a prescription in order to avoid singularities on the real spectrum. Let us postpone momentarily its specification. Then, we can proceed as usual by shifting the fields 
\begin{align}
\tilde {\pmb V}(\pmb k)={\mathcal M}_{i\varepsilon}^{-1} (\pmb k) \tilde {\pmb J}_{\pmb V}(\pmb k)+\tilde {\pmb \Phi}(\pmb k),
\end{align}
where 
\begin{eqnarray}
\tilde {\pmb \Phi}\equiv \begin{pmatrix}
\pmb {a} \\ \pmb { p} \\ b\\ \tilde \phi
\end{pmatrix},
\end{eqnarray}
so that
\begin{align}
Z[\pmb {J_A}, \pmb {J_P}, J_B, J_\lambda]=&
\exp \left\{ \frac i2 \int_{\mathbb R^4} \frac {d^4 \pmb k}{(2\pi)^4} e^{-i\pmb k \cdot (\pmb x-\pmb y)} \tilde {\pmb J}_{\pmb V}(-\pmb k)^T  {\mathcal M}_{i\varepsilon}^{-1} (\pmb k)  \tilde {\pmb J}_{\pmb V}(\pmb k)  \right\}\cdot \cr
&\cdot\pmb \int [D\pmb {a} D \pmb {p} D b D \tilde \phi] \exp \left\{ -\frac i2 \int_{\mathbb R^4} \frac {d^4\pmb k}{(2\pi)^4} \tilde{\pmb \Phi}(-\pmb k) 
{\mathcal M}_{i\varepsilon} (\pmb k) \tilde {\pmb \Phi}(\pmb k) \right\}
\end{align}
and, since
\begin{eqnarray}
Z[\pmb 0, \pmb 0, 0, 0]=Z_0= \langle 0|0 \rangle =1
\end{eqnarray}
after setting $J_\lambda=0$, we finally get
\begin{align}
Z[\pmb {J_A}, \pmb {J_P}, J_B]=\exp \left\{ \frac i2 \int_{\mathbb R^4 \times \mathbb R^4} d^4 \pmb x d^4 \pmb y \pmb J^T (\pmb x) \pmb {G}_F(\pmb x-\pmb y) \pmb J(\pmb y) \right\},
\end{align}
where
\begin{align}
\pmb J(\pmb x)=
\begin{pmatrix}
\pmb {J_A}\\ \pmb {J_P}\\ J_B
\end{pmatrix},
\end{align}
and
\begin{eqnarray}
\pmb {G}_F(\pmb x)= \int_{\mathbb R^4} \frac {d^4 \pmb k}{(2\pi)^4} e^{-i\pmb k\cdot \pmb x} \bar {{\mathcal M}}_{i\varepsilon}^{-1} (\pmb k).
\end{eqnarray}
Here, with $\bar {{\mathcal M}}_{i\varepsilon}^{-1} (\pmb k)$ we mean the reduction of the matrix ${\mathcal M}_{i\varepsilon}^{-1} (\pmb k)$ after dropping the last row and last column out, as specified below in (\ref{tilde M-1 reduc}).\\
Now, we specify the Feynman-St\"uckelberg prescription (more on this can be found in Appendix \ref{app:propagator}). We define it by means of the complex shifts $k^2 \to k^2+i\varepsilon$, $\omega_0^2\to \omega_0^2-i\varepsilon$, t.i.:
\begin{align}
{\mathcal M}_{i\varepsilon} (\pmb k):= 
\begin{pmatrix}
\frac 1{4\pi} ((k^2+i\varepsilon)\mathbb I-\pmb k \pmb k^t) & ig (\omega \mathbb I -\pmb v \pmb k^t ) & -i\pmb k & \pmb 0 \\
-ig (\omega \mathbb I -\pmb k \pmb v^t ) & \frac 1{\chi} \left( \frac {\omega^2}{\omega_0^2-i\varepsilon}-1 \right)\mathbb I & \pmb 0 & \pmb v \\
i\pmb k^t & \pmb 0^t & -\xi & 0 \\
\pmb 0^t & \pmb v^t & 0 & 0
\end{pmatrix}
\end{align}
from which we get 
\begin{eqnarray}
\mathcal M^{-1}_{i\varepsilon}(\pmb k)=
\begin{pmatrix}
M^{-1}_{i\varepsilon} & N_{i\varepsilon} & -i\frac 1{k^2+{i\varepsilon}}\pmb k & \pmb 0 \\
N^\dagger_{i\varepsilon} & Q_{i\varepsilon}+\frac {\chi (\omega_0^2-{i\varepsilon})}{\omega^2-\omega_0^2+{i\varepsilon}} \left(\mathbb I- \pmb v \pmb v^t \right)& \pmb 0  & \pmb v \\
i\frac 1{k^2+{i\varepsilon}} \pmb k^t & \pmb 0^t & 0 & 0 \\
\pmb 0^t & \pmb v^t & 0 & - \frac {\omega^2-\omega_0^2+{i\varepsilon}}{\chi (\omega_0^2-{i\varepsilon})}
\end{pmatrix}. \label{tilde M-1}
\end{eqnarray}
where
\begin{eqnarray}
&&M_{i\varepsilon}^{-1}=\frac 1{\frac {k^2+i\varepsilon}{4\pi} -\frac {g^2\chi \omega_0^2 \omega^2}{\omega^2-\omega_0^2}}\mathbb I+\left[ \frac 1{(k^2+i\varepsilon)^2} \left( \xi-\frac 1{\frac 1{4\pi} 
-\frac {g^2\chi (\omega_0^2-i\varepsilon)}{\omega^2-\omega_0^2+i\varepsilon}} \right)\right.\cr
&&\phantom{M^{-1}=}\left.+\frac 1{\frac {k^2+i\varepsilon}{4\pi} -\frac {g^2\chi (\omega_0^2-i\varepsilon) \omega^2}{\omega^2-\omega_0^2+i\varepsilon}} \frac 1{k^2+i\varepsilon} \frac 1{\frac 1{4\pi} 
-\frac {g^2\chi (\omega_0^2-i\varepsilon)}{\omega^2-\omega_0^2+i\varepsilon}}\frac {g^2\chi (\omega_0^2-i\varepsilon)}{\omega^2-\omega_0^2+i\varepsilon} \right]
\pmb k \pmb k^t\cr
&&\phantom{M^{-1}=}-\frac {4\pi}{\frac {k^2+i\varepsilon}{4\pi} -\frac {g^2\chi (\omega_0^2-i\varepsilon) \omega^2}{\omega^2-\omega_0^2+i\varepsilon}} 
\frac {\omega^2-\omega_0^2+i\varepsilon}{\omega^2-\bar\omega^2+i\varepsilon}
 \frac {g^2\chi (\omega_0^2-i\varepsilon)}{\omega^2-\omega_0^2+i\varepsilon}
\left[\omega\frac { (\pmb k \pmb v^t+\pmb v \pmb k^t)}{k^2+i\varepsilon}- \pmb v \pmb v^t \right] \label{Mdritta-1}
\end{eqnarray}
is the inverse of the matrix
\begin{align}
M_{i\varepsilon}&=\left(\frac {k^2+i\varepsilon}{4\pi} -\frac {g^2\chi (\omega_0^2-i\varepsilon) \omega^2}{\omega^2-\omega_0^2+i\varepsilon}\right)\mathbb I-\frac 1{4\pi} \left( 1-\frac {4\pi}\xi \right) \pmb k \pmb k^t +\omega 
\frac {g^2\chi (\omega_0^2-i\varepsilon)}{\omega^2-\omega_0^2+i\varepsilon}(\pmb k \pmb v^t+\pmb v \pmb k^t)\cr 
&-(k^2+i\varepsilon)\frac {g^2\chi (\omega_0^2-i\varepsilon)}{\omega^2-\omega_0^2+i\varepsilon} \pmb v \pmb v^t, \label{MM}
\end{align}
whereas
\begin{eqnarray}
&& N_{i\varepsilon}=-ig \frac {\chi (\omega_0^2-i\varepsilon)}{\omega^2-\omega_0^2+i\varepsilon} M^{-1}(\omega \mathbb I-\pmb v \pmb k^t)\cr
&& \phantom{N_{i\varepsilon}}=-ig \frac {\chi (\omega_0^2-i\varepsilon)}{\omega^2-\omega_0^2+i\varepsilon}\frac 1{\frac {k^2+i\varepsilon}{4\pi} -\frac {g^2\chi (\omega_0^2-i\varepsilon) \omega^2}{\omega^2-\omega_0^2+i\varepsilon}}
\left[ \omega \mathbb I+\frac 1{\frac 1{4\pi} -\frac {g^2\chi (\omega_0^2-\varepsilon)}{\omega^2-\omega_0^2+i\varepsilon}} \frac {\omega g^2 \chi (\omega_0^2-i\varepsilon)}{\omega^2-\omega_0^2+i\varepsilon}
\left( \pmb v \pmb v^t \right. \right. \cr 
&& \phantom{N_{i\varepsilon}=} +\frac 1{k^2+i\varepsilon} \pmb k \pmb k^t
\left.\left. -\frac \omega{ k^2+i\varepsilon} \pmb v \pmb k^t-\frac {\omega^2-\omega_0^2+i\varepsilon}{4\pi \omega g^2 \chi (\omega_0^2-i\varepsilon)}\pmb k \pmb v^t \right) \right], \label{N}\\
&& N_{i\varepsilon}^\dagger=i g \frac {\chi (\omega_0^2-i\varepsilon)}{\omega^2-\omega_0^2+i\varepsilon} (\omega \mathbb I-\pmb v \pmb k^t)M^{-1}\cr
&& \phantom{N_{i\varepsilon}^\dagger}=i g \frac {\chi (\omega_0^2-i\varepsilon)}{\omega^2-\omega_0^2+i\varepsilon}\frac 1{\frac {k^2+i\varepsilon}{4\pi} -\frac {g^2\chi (\omega_0^2-i\varepsilon) \omega^2}{\omega^2-\omega_0^2+i\varepsilon}}
\left[ \omega \mathbb I+\frac 1{\frac 1{4\pi} -\frac {g^2\chi (\omega_0^2-\varepsilon)}{\omega^2-\omega_0^2+i\varepsilon}} \frac {\omega g^2 \chi (\omega_0^2-i\varepsilon)}{\omega^2-\omega_0^2+i\varepsilon}
\left( \pmb v \pmb v^t \right. \right. \cr 
&& \phantom{N_{i\varepsilon}^\dagger=} +\frac 1{k^2+i\varepsilon} \pmb k \pmb k^t
\left.\left. -\frac \omega{k^2+i\varepsilon} \pmb v \pmb k^t-\frac {\omega^2-\omega_0^2+i\varepsilon}{4\pi \omega g^2 \chi (\omega_0^2-i\varepsilon)}\pmb k \pmb v^t \right) \right].\label{N-daga}
\end{eqnarray}
Notice that $N_{i\varepsilon}^\dagger$ is the Hermitian conjugate of $N_{i\varepsilon}$ only when $\varepsilon=0$.
Finally
\begin{eqnarray}
&& Q_{i\varepsilon}=g^2 \frac {\chi^2 (\omega_0^2-i\varepsilon)^2}{(\omega^2-\omega_0^2+i\varepsilon)^2} (\omega \mathbb I-\pmb k \pmb v^t)M^{-1} (\omega \mathbb I-\pmb v \pmb k^t)\cr
&& \phantom{Q_{i\varepsilon}}=g^2 \frac {\chi^2 (\omega_0^2-i\varepsilon)^2}{(\omega^2-\omega_0^2+i\varepsilon)^2} \frac 1{\frac {k^2+i\varepsilon}{4\pi} 
-\frac {g^2\chi (\omega_0^2-i\varepsilon) \omega^2}{\omega^2-\omega_0^2+i\varepsilon}}
\left[ \omega^2\mathbb I \phantom{\frac 1{\frac 1{4\pi} -\frac {g^2\chi (\omega_0^2-i\varepsilon) }{\omega^2-\omega_0^2+i\varepsilon}}} \right. \cr
&& \phantom{Q_{i\varepsilon}=}\left.
+\frac 1{\frac 1{4\pi} -\frac {g^2\chi (\omega_0^2-i\varepsilon) }{\omega^2-\omega_0^2+i\varepsilon}}\left(
\frac {g^2\chi (\omega_0^2-i\varepsilon) \omega^2}{\omega^2-\omega_0^2+i\varepsilon} \pmb v \pmb v^t -\frac \omega{4\pi} (\pmb k \pmb v^t+\pmb v \pmb k^t)+ \frac 1{4\pi} \pmb k \pmb k^t \right) \right]. \label{Q}
\end{eqnarray}
A deduction of these formulas is presented in Appendix \ref{sec:M-1}.\\
In conclusion
\begin{eqnarray}
\bar {\mathcal M}^{-1}_{i\varepsilon}(\pmb k)=
\begin{pmatrix}
M^{-1}_{i\varepsilon} & N_{i\varepsilon} & -i\frac 1{k^2+{i\varepsilon}}\pmb k  \\
N^\dagger_{i\varepsilon} & Q_{i\varepsilon}+\frac {\chi (\omega_0^2-{i\varepsilon})}{\omega^2-\omega_0^2+{i\varepsilon}} \left(\mathbb I- \pmb v  \pmb v^t \right)& \pmb 0  \\
i\frac 1{k^2+{i\varepsilon}} \pmb k^t & \pmb 0^t & 0 
\end{pmatrix}. \label{tilde M-1 reduc}
\end{eqnarray}
\subsubsection{ Remark:} A comment is in order. The consistency of the constraints would require to impose the condition $\pmb v\cdot \pmb J_{\pmb P}=0$ also. However, because of the condition $\pmb v\cdot \pmb P=0$, we see that
after leaving $\pmb J_{\pmb P}$ unconstrained we have that 
\begin{eqnarray}
v^\mu \frac {\delta}{\delta J_P^\mu(\pmb x)} Z[\pmb {J_A}, \pmb {J_P}, J_B]=0,
\end{eqnarray}
so we don't need to take care of the constraint. This is consistent with the fact that the vector 
\begin{eqnarray}
\pmb V_0 =\begin{pmatrix}
\pmb 0 \\ \pmb v \\ 0
\end{pmatrix},
\end{eqnarray}
is in both the left kernel and the right kernel of (\ref{tilde M-1 reduc}).
\subsubsection{ Remark:} Exactly the same result can be obtained by using the Dirac procedure, even though in that case one has to introduce a larger number of auxiliary fields, see \cite{PhysicaScripta}. Another way to apply the Faddeev-Jackiw
method is to generate a canonical momentum for the field $B$ by adding a kinematical therm for it. In this case the Lagrangian becomes
\begin{align}
\mathcal{L}=&-\frac1{16\pi}F_{\mu \nu}F^{\mu \nu} - \frac1{2\chi \omega_0^2}[(v^{\rho}\partial_{\rho}P_{\mu})(v^{\rho}\partial_{\rho}P^{\mu})]+\frac1{2\chi}P_{\mu}P^{\mu} - \frac{g}{2}(v_{\mu}P_{\nu}-v_{\nu}P_{\mu})F^{\mu \nu} \cr
& + B(\partial_{\mu}A^{\mu}) + \frac{\xi}{2}B^2 - \frac{\bar{\xi}}{2}(\partial_{\mu}B)(\partial^{\mu}B) + \lambda (v_{\mu}P^{\mu}),
\end{align}
and the momentum conjugate to $B$ is $\Pi_B=-\frac{\bar{\xi}}{c} \partial_0 B$. In this way the field $B$ can be included exactly at the same footing as the other fields, and proceeding as above one finally gets the same result where now in the momentum space $\xi$ is replaced by $\xi+\bar \xi k^2$. However, the price would be to introduce a new parameter, $\bar \xi$, which is expected to be zero, since we have not vacuum polarization (see \cite{Gitman-Tyutin}). Moreover,
$B$ appears at first order in the equations of motion, already in the starting problem. Thus, there are no reasons for promoting it to the second order with the aim of going back to the first order formalism.

\section{The propagator}
The exact propagator $\pmb G(\pmb x, \pmb y)$ of the relativistic Hopfield model has been computed in \cite{Quantum} by using the oscillator representation. For convenience we report here the result. It can be written as
\begin{eqnarray}
iG^{IJ}(\pmb x,\pmb y)=\langle 0| T(\Phi^I(\pmb x)\Phi^J(\pmb y))|0\rangle, \qquad I,J=1,\ldots,9,
\end{eqnarray}
where 
\begin{eqnarray}
\Phi^I=
\begin{cases}
A^{I-1} & \mbox{if $I=1,2,3,4$}, \\
P^{I-5} & \mbox{if $I=5,6,7,8$}, \\
B & \mbox{if $I=9$},
\end{cases}
\end{eqnarray}
and takes the form
\begin{eqnarray}
G^{IJ}(\pmb x,\pmb y)=G^{IJ}(\pmb x,\pmb y)_+\theta (x^0-y^0)+G^{IJ}(\pmb x,\pmb y)_-\theta(y^0-x^0),
\end{eqnarray}
where
\begin{align}
iG^{(\mu+1)(\nu+1)}(\pmb x,\pmb y)_+=&\langle 0| A^\mu(\pmb x)A^\nu(\pmb y)|0\rangle\cr
=&\int_{\mathbb R^3} \frac {d^3\vec k}{(2\pi)^3}e^{-i\pmb k_+\cdot (\pmb x-\pmb y)}
\frac {v^\mu k_+^\nu+v^\nu k_+^\mu}{|\vec k| \omega_+} \pi \left(\frac \xi{4\pi}+\frac  {\omega_+^2-\omega_0^2}{\omega_+^2-\bar\omega^2}  \right)\cr
&-i \int_{\mathbb R^3} \frac {d^3\vec k}{(2\pi)^3}e^{-i\pmb k_+\cdot (\pmb x-\pmb y)}\frac {(\pmb x-\pmb y)\cdot \pmb v}{|\vec k|\omega_+}\left(\frac \xi{4\pi}-\frac  {\omega_+^2-\omega_0^2}{\omega_+^2-\bar\omega^2} \right) \pi k^\mu k^\nu\cr
&+ \int_{\mathbb R^3} \frac {d^3\vec k}{(2\pi)^3}e^{-i\pmb k_+\cdot (\pmb x-\pmb y)} \frac {\frac \xi{4\pi}(\omega_+^2-\bar\omega^2)^2+(\omega_+^2-\omega_0^2)
(\omega_+^2-\bar\omega^2) -8\pi\omega_+^2 g^2 \chi \omega_0^2}{\omega_+^3 |\vec k| (\omega_+^2-\bar \omega^2)^2}
\pi k^\mu k^\nu\cr
&+\frac {8\pi^2 g^2 \chi \omega_0^2}{\bar \omega v^0} \int_{\mathbb R^3} \frac {d^3\vec k}{(2\pi)^3}e^{-i\pmb k_>\cdot (\pmb x-\pmb y)}\frac 1{\bar\omega^2- k^2_>} \left( v^\mu-\frac {\bar\omega}{k^2_>} k^\mu_>\right)
\left( v^\nu-\frac {\bar\omega}{k^2_>} k^\nu_>\right)\cr
&+\sum_{a=1}^2 \sum_{i=1}^2 \int_{\mathbb R^3} \frac {d^3\vec k}{(2\pi)^3}e^{-i\pmb k_{(a)}\cdot (\pmb x-\pmb y)} \frac {e^{(a)\mu}_i(\vec k) e^{(a)\nu}_i (\vec k)}{DR'_{(a)}(\vec k)}; \label{AA}
\end{align}
\begin{align}
iG^{(\mu+1)(\nu+5)}(\pmb x,\pmb y)_+=&\langle 0| A^\mu(\pmb x)P^\nu(\pmb y)|0\rangle\cr
=&-i \int_{\mathbb R^3} \frac {d^3\vec k}{(2\pi)^3}e^{-i\pmb k_+\cdot (\pmb x-\pmb y)} \frac {2\pi g\chi \omega_0^2}{|\vec k| \omega_+} \frac {\omega_+ k_+^\mu v^\nu- k_+^\mu k_+^\nu}{(\omega_+^2-\bar\omega^2)} \cr
&-i\frac {2\pi g\omega_0^2 \chi}{\bar \omega v^0} \int_{\mathbb R^3} \frac {d^3\vec k}{(2\pi)^3}e^{-i\pmb k_>\cdot (\pmb x-\pmb y)} \frac 1{\bar \omega^2-k_>^2} \left( v^\mu-\frac {\bar \omega}{k^2_>} k^\mu_> \right) (\bar \omega v^\nu-k_>^\nu)\cr
& -ig \chi \omega_0^2  \sum_{a=1}^2 \sum_{i=1}^2 \int_{\mathbb R^3} \frac {d^3\vec k}{(2\pi)^3}e^{-i\pmb k_{(a)}\cdot (\pmb x-\pmb y)} \frac {\omega_{(a)}}{\omega_{(a)}^2-\omega_0^2} 
\frac {e^{(a)\mu}_i(\vec k) e^{(a)\nu}_i (\vec k)}{DR'_{(a)}(\vec k)}; \label{AP}
\end{align}
\begin{align}
iG^{(\mu+1)9}(\pmb x,\pmb y)_+=&\langle 0| A^\mu(\pmb x)B(\pmb y)|0\rangle=-\frac i2 \int_{\mathbb R^3} \frac {d^3\vec k}{(2\pi)^3}e^{-i\pmb k_+\cdot (\pmb x-\pmb y)} \frac {k^\mu}{|\vec k|}; \label{AB}
\end{align}
\begin{align}
iG^{(\mu+5)(\nu+5)}(\pmb x,\pmb y)_+=&\langle 0| P^\mu(\pmb x)P^\nu(\pmb y)|0\rangle\cr
=&\frac {\chi \omega_0^2}{2 \bar\omega v^0} \int_{\mathbb R^3} \frac {d^3\vec k}{(2\pi)^3}e^{-i\pmb k_>\cdot (\pmb x-\pmb y)} \frac 1{\bar \omega^2-\omega_0^2} (\bar \omega v^\mu- k_>^\mu)(\bar \omega v^\nu- k_>^\nu)\cr
& +g^2 \chi^2 \omega_0^4  \sum_{a=1}^2 \sum_{i=1}^2 \int_{\mathbb R^3} \frac {d^3\vec k}{(2\pi)^3}e^{-i\pmb k_{(a)}\cdot (\pmb x-\pmb y)} \frac {\omega_{(a)}}{\omega_{(a)}^2-\omega_0^2} 
\frac {e^{(a)\mu}_i(\vec k) e^{(a)\nu}_i (\vec k)}{DR'_{(a)}(\vec k)}; \label{PP}
\end{align}
\begin{align}
iG^{(\mu+5)9}(\pmb x,\pmb y)_+=&\langle 0| P^\mu(\pmb x)B(\pmb y)|0\rangle=0; \label{PB}
\end{align}
\begin{align}
iG^{99}(\pmb x,\pmb y)_+=&\langle 0| B(\pmb x)B(\pmb y)|0\rangle=0. \label{BB}
\end{align}
We refer to \cite{Quantum} for the notations.
We want to compare this expression for the propagator with the results of the previous section. Our main result is
\begin{prop}\label{proposition}
The propagator is
\begin{eqnarray}
\pmb G(\pmb x, \pmb y)= \pmb G_F(\pmb x-\pmb y), \label{propagatore}
\end{eqnarray}
\end{prop}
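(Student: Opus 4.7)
The plan is to show that the two expressions for the propagator are equal as distributions on $\mathbb{R}^4 \times \mathbb{R}^4$. Since both sides are translation invariant, it suffices to compare their Fourier transforms. The path-integral side gives us $\bar{\mathcal M}_{i\varepsilon}^{-1}(\pmb k)$ directly in momentum space via equation (\ref{tilde M-1 reduc}), while the canonical side (\ref{AA})--(\ref{BB}) is already presented as a decomposition into contributions supported on the mass shells of the theory. So the natural strategy is to perform the $k^0$ contour integration on
\begin{equation*}
\pmb G_F(\pmb x) = \int \frac{d^4 \pmb k}{(2\pi)^4}\, e^{-i\pmb k \cdot \pmb x}\, \bar{\mathcal M}_{i\varepsilon}^{-1}(\pmb k),
\end{equation*}
and show that the sum of residues matches the canonical decomposition block by block.

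The first step is to identify the pole structure of $\bar{\mathcal M}^{-1}_{i\varepsilon}$. From the explicit determinant formula one reads off four families of poles in $k^0$: the photon pole $k^2 = 0$ (giving $\omega_+$), the longitudinal pole $\omega^2 = \bar\omega^2$ (giving $\pmb k_>$), and the two dispersion branches $\omega_{(1)}, \omega_{(2)}$ solving $\tfrac{k^2}{4\pi} = \tfrac{g^2\chi\omega_0^2\omega^2}{\omega^2 - \omega_0^2}$, which correspond exactly to the polariton branches in (\ref{AA})--(\ref{PP}). The Feynman-Stückelberg prescription $k^2 \to k^2 + i\varepsilon$, $\omega_0^2 \to \omega_0^2 - i\varepsilon$ chosen in Section 2.2 selects, for $x^0 > 0$, the positive-frequency residues, giving precisely the $\theta(x^0 - y^0)$ piece of the canonical expression. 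The apparent singularity at $\omega^2 = \omega_0^2$ in the individual matrix entries $M^{-1}$, $N$, $Q$ is not a pole of $\bar{\mathcal M}^{-1}$, consistently with its absence in (\ref{AA})--(\ref{BB}).

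Then I would proceed block by block. For $\bar{\mathcal M}^{-1}_{AA}$, compute the residues of $M^{-1}_{i\varepsilon}$ at each of the four poles and match them against the four groups of terms in (\ref{AA}): the transverse/gauge piece at $k^2=0$ reproduces the first three integrals involving $\omega_+$ and the gauge parameter $\xi$; the $\bar\omega$-residue reproduces the longitudinal contribution; the polariton residues produce the sum over $a = 1,2$ with $DR'_{(a)}$ in the denominator (which is exactly the derivative of the dispersion relation evaluated at the pole). The same scheme applied to $N_{i\varepsilon}$ gives (\ref{AP}), to $Q_{i\varepsilon} + \frac{\chi(\omega_0^2-i\varepsilon)}{\omega^2-\omega_0^2+i\varepsilon}(\mathbb I - \pmb v \pmb v^t)$ gives (\ref{PP}) (here the $\omega_0^2$ pole must cancel, which it does thanks to the transversality constraint $\pmb v \cdot \pmb P = 0$), and to the off-diagonal $\pmb k/(k^2+i\varepsilon)$ entry gives (\ref{AB}). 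The blocks $\bar{\mathcal M}^{-1}_{PB}$ and $\bar{\mathcal M}^{-1}_{BB}$ vanish identically from (\ref{tilde M-1 reduc}), matching (\ref{PB}) and (\ref{BB}).

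The main obstacle is purely algebraic: verifying that each residue agrees with the corresponding term in (\ref{AA})--(\ref{PP}) requires careful manipulation of the rather involved expressions (\ref{Mdritta-1}), (\ref{N}), (\ref{Q}), including a consistent treatment of the transverse polarization sum $\sum_i e^{(a)\mu}_i e^{(a)\nu}_i$ (which must emerge from combining the $\mathbb I$ and $\pmb k \pmb k^t$, $\pmb v \pmb v^t$, $\pmb k \pmb v^t + \pmb v \pmb k^t$ terms at the polariton poles) and checking that the spurious $\omega^2 = \omega_0^2$ factor in the denominators cancels against the numerator in the combinations actually appearing in $\bar{\mathcal M}^{-1}_{i\varepsilon}$. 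The consistency check mentioned in the Remark after (\ref{tilde M-1 reduc}) — namely $\pmb V_0$ lying in the kernel — provides a useful sanity verification that the $\bar\omega$-mode decouples in the direction $\pmb v$ as required by the constraint. Given the complexity, I expect this comparison to be carried out in the appendices referenced in the text.
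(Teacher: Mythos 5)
Your overall strategy---closing the $k^0$ contour on $\bar{\mathcal M}^{-1}_{i\varepsilon}$ with the stated prescription and matching residues block by block against (\ref{AA})--(\ref{BB})---is exactly what the paper does in Appendix \ref{app:propagator}, and your identification of the pole families (free photon, b-mode, two polariton branches), of the spurious $\omega^2=\omega_0^2$ singularity, and of the role of the kernel vector $\pmb V_0$ is all correct.

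However, one step fails as literally described. At the photon pole the $AA$ block has a \emph{double} pole (the $\xi$-dependent $\pmb k\pmb k^t/(k^2+i\varepsilon)^2$ term of (\ref{Mdritta-1})), so the residue involves a derivative of the integrand and produces a term proportional to $(x^0-y^0)\,k^\mu k^\nu$, whereas the corresponding term in (\ref{AA}) carries $(\pmb x-\pmb y)\cdot\pmb v$. These do not agree term by term, so the naive $k^0$ residue computation does \emph{not} reproduce the first three rows of (\ref{AA}); the paper explicitly notes this apparent mismatch and resolves it by integrating along the direction $k^\mu v_\mu$ instead of $k^0$ (boost to the frame with $v^\mu\to(1,\vec 0)$, integrate, boost back). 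That change of integration direction is the one extra idea your plan is missing. A smaller omission: you take the choice $\omega_0^2\to\omega_0^2-i\varepsilon$ for granted, but for the polariton branches it is \emph{not} equivalent to $\omega^2\to\omega^2+i\varepsilon$, and the paper proves a separate lemma showing that this choice shifts the poles $k^0_{(a)}$ into the half-plane correlated with the sign of $DR'_{(a)}$, which is what guarantees that positive-norm transverse states propagate forward in time.
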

The proof of this proposition is given in Appendix \ref{app:propagator}.

\section{Conclusions}

In this paper we  have dealt with the quantization of the relativistic covariant Hopfield model via the path integral approach. 
As our model is Gaussian, it is completely determined by the two point function, i.e. the propagator, which can be computed either in a canonical quantization approach as the two point function of the fields 
(as we did in \cite{Quantum}, with considerable efforts), or in a relatively  straightforward way in the path integral formulation. 
Due to the presence of a constraint in the theory, the path integral implementation of the Faddeev-Jackiw method for constrained theories has been used. 
In contrast to the more standard Dirac's method,
which was adopted in \cite{PhysicaScripta,Quantum}, the Faddev-Jackiw approach is simpler and avoids the division of the constraints into different classes and the redefinition of the Poisson brakets. This represents our first interesting contribution, as we provide a non-trivial example for 
the electromagnetic field quantization in a covariant gauge in the Faddeev-Jackiw framework. \\
In particular, starting from the singular Lagrangian (\ref{eq:Lc}), we have obtained a new constraint, $\Lambda$, which is identical to the one  which emerged as a second-class second-stage constraint in the Dirac's procedure \cite{PhysicaScripta}. 
Computing the functional measure for the path integral has been then straightforward, as also pointed out by Toms \cite{Toms}, and the standard procedure for determining the quantum field theory partition function in terms of the propagator 
 has been implemented. 
An important question to be taken into account has been  the right choice for the Feynman-St\"uckelberg prescription, as pointed out in Appendix \ref{app:propagator}, which must ensure that positive square norm states propagate forward in time. \\
A further key-result of our analysis consists in the equivalence between the exact propagator obtained from the direct calculation, coming from the canonical quantization formalism,  and the one coming from the Faddeev-Jackiw path integral quantization.



\newpage
\begin{appendix}

\section{Computation of $\mathcal M^{-1}$}\label{sec:M-1}
For simplicity, we will compute the matrix $M^{-1}$ for $\varepsilon=0$, from which the computation of $\mathcal M^{-1}_{i\varepsilon}$ is obvious. For this, we apply the Gauss method to the matrix
\begin{eqnarray}
\left( 
\begin{array}{cccc}
\frac 1{4\pi} (k^2\mathbb I-\pmb k \pmb k^t) &  ig(\omega \mathbb I-\pmb v \pmb k^t) & -i\pmb k & \pmb 0\\
-ig(\omega \mathbb I -\pmb k \pmb v^t) & \frac 1\chi \left(\frac {\omega^2}{\omega_0^2} -1 \right) \mathbb I & \pmb 0 & \pmb v \\
i\pmb k^t & \pmb 0^t &  -\xi & 0  \\
\pmb 0^t & \pmb v^t & 0 & 0
\end{array}
\right|
\left. 
\begin{array}{cccc}
\mathbb I & \mathbb O & \pmb 0 & \pmb 0 \\
\mathbb O & \mathbb I & \pmb 0 & \pmb 0 \\
\pmb 0^t & \pmb 0^t & 1 & 0 \\
\pmb 0^t & \pmb 0^t & 0 & 1 
\end{array}
\right).
\end{eqnarray}
First we multiply the third line by $-1/\xi$, then we add it, multiplied by the column  $i\pmb k $ from the left, to the first line and multiply the second line by $\chi\omega_0^2(\omega^2-\omega_0^2)^{-1}$ to get
\begin{eqnarray}
\left( 
\begin{array}{cccc}
\frac 1{4\pi} \left(k^2\mathbb I-\left(1-\frac {4\pi}\xi \right)\pmb k \pmb k^t\right) &  ig(\omega \mathbb I-\pmb v \pmb k^t) & \pmb 0 & \pmb 0\\
-ig\frac {\chi \omega_0^2}{(\omega^2-\omega_0^2)}(\omega \mathbb I -\pmb k \pmb v^t) & \mathbb I & \pmb 0 & \frac {\chi \omega_0^2}{(\omega^2-\omega_0^2)}\pmb v \\
-\frac i\xi \pmb k^t & \pmb 0^t &  1 & 0  \\
\pmb 0^t & \pmb v^t & 0 & 0
\end{array}
\right|
\left. 
\begin{array}{cccc}
\mathbb I & \mathbb O & -\frac i\xi \pmb k & \pmb 0 \\
\mathbb O & \frac {\chi \omega_0^2}{(\omega^2-\omega_0^2)} \mathbb I & \pmb 0 & \pmb 0 \\
\pmb 0^t & \pmb 0^t & -\frac 1\xi & 0 \\
\pmb 0^t & \pmb 0^t & 0 & 1 
\end{array}
\right).
\end{eqnarray}
Next, to the first line we subtract the second one multiplied by $ig(\omega \mathbb I-\pmb v \pmb k^t)$ from the left, and we get
\begin{eqnarray}
\left( 
\begin{array}{cccc}
M &  \mathbb O & \pmb 0 & \pmb 0\\
-ig\frac {\chi \omega_0^2}{(\omega^2-\omega_0^2)}(\omega \mathbb I -\pmb k \pmb v^t) & \mathbb I & \pmb 0 & \frac {\chi \omega_0^2}{(\omega^2-\omega_0^2)}\pmb v \\
-\frac i\xi \pmb k^t & \pmb 0^t &  1 & 0  \\
\pmb 0^t & \pmb v^t & 0 & 0
\end{array}
\right|
\left. 
\begin{array}{cccc}
\mathbb I & -ig\frac {\chi \omega_0^2}{(\omega^2-\omega_0^2)}(\omega \mathbb I -\pmb v \pmb k^t) & -\frac i\xi \pmb k & \pmb 0 \\
\mathbb O & \frac {\chi \omega_0^2}{(\omega^2-\omega_0^2)} \mathbb I & \pmb 0 & \pmb 0 \\
\pmb 0^t & \pmb 0^t & -\frac 1\xi & 0 \\
\pmb 0^t & \pmb 0^t & 0 & 1 
\end{array}
\right), \label{passaggio}
\end{eqnarray}
with $M$ as in (\ref{MM}), with $\varepsilon=0$. At this point we need to compute $M^{-1}$. Since $M$ is a span of the $4\times 4$ identity $\mathbb I$ and the symmetric tensors of rank two generated by $\pmb k$ and $\pmb v$, the same must happen
for $M^{-1}$, so we look for it as a matrix of the form
\begin{eqnarray}
M^{-1}=\alpha_0 \mathbb I+\alpha_1 \pmb k \pmb k^t+\alpha_2 (\pmb k \pmb v^t +\pmb v \pmb k^t) +\alpha_3 \pmb v \pmb v^t.
\end{eqnarray}
From this, by imposing $M^{-1} M=\mathbb I$ we get (\ref{Mdritta-1}). Then, we first multiply the first line of (\ref{passaggio}) by $M^{-1}$, and next we add it to the third line after multiplication by $i\pmb k^t/\xi$ from the left:
\begin{eqnarray}
\left( 
\begin{array}{cccc}
\mathbb I &  \mathbb O & \pmb 0 & \pmb 0\\
-ig\frac {\chi \omega_0^2}{(\omega^2-\omega_0^2)}(\omega \mathbb I -\pmb k \pmb v^t) & \mathbb I & \pmb 0 & \frac {\chi \omega_0^2}{(\omega^2-\omega_0^2)}\pmb v \\
\pmb 0^t & \pmb 0^t &  1 & 0  \\
\pmb 0^t & \pmb v^t & 0 & 0
\end{array}
\right|
\left. 
\begin{array}{cccc}
M^{-1} & N & -i\frac {\pmb k}{k^2} & \pmb 0 \\
\mathbb O & \frac {\chi \omega_0^2}{(\omega^2-\omega_0^2)} \mathbb I & \pmb 0 & \pmb 0 \\
i\frac {\pmb k^t}{k^2}& \pmb 0^t & 0 & 0 \\
\pmb 0^t & \pmb 0^t & 0 & 1 
\end{array}
\right),
\end{eqnarray}
where $N$ is given in (\ref{N}) and we used that $\pmb k^t N=0$. To the second line we add the first one multiplied by $ig\frac {\chi \omega_0^2}{(\omega^2-\omega_0^2)}(\omega \mathbb I -\pmb k \pmb v^t) $ from the left, and then we subtract the 
second line multiplied by $\pmb v^t$ from the left, to the fourth line, thus getting
\begin{eqnarray}
\left( 
\begin{array}{cccc}
\mathbb I &  \mathbb O & \pmb 0 & \pmb 0\\
\mathbb O & \mathbb I & \pmb 0 & \frac {\chi \omega_0^2}{(\omega^2-\omega_0^2)}\pmb v \\
\pmb 0^t & \pmb 0^t &  1 & 0  \\
\pmb 0^t & \pmb 0^t & 0 & -\frac {\chi \omega_0^2}{(\omega^2-\omega_0^2)}
\end{array}
\right|
\left. 
\begin{array}{cccc}
M^{-1} & N & -i\frac {\pmb k}{k^2} & \pmb 0 \\
N^\dagger & Q+\frac {\chi \omega_0^2}{(\omega^2-\omega_0^2)} \mathbb I & \pmb 0 & \pmb 0 \\
i\frac {\pmb k^t}{k^2}& \pmb 0^t & 0 & 0 \\
\pmb 0^t & -\frac {\chi \omega_0^2}{(\omega^2-\omega_0^2)} \pmb v^t & 0 & 1 
\end{array}
\right), 
\end{eqnarray}
with $N^\dagger$ and $Q$ given as in (\ref{N-daga}) and (\ref{Q}) respectively. Finally, we add the last row multiplied by $\pmb v$ from the right to the second line, and next multiply the last row by $-\frac {(\omega^2-\omega_0^2)}{\chi \omega_0^2}$, and
we get the desired result (\ref{tilde M-1}).

\section{On the propagator}\label{app:propagator}
In order to prove proposition \ref{proposition} we need to integrate out the $k^0$ direction in (\ref{propagatore}). This can be done as usual by means of the methods of complex integration. The prescription must ensure that positive square norm states
propagate forward in time. Since the modes have dispersion relations
\begin{eqnarray}
& 0=\omega^2-\bar \omega^2, \\
& 0=\frac {k^2}{4\pi} -\frac {g^2\chi \omega_0^2\omega^2}{\omega^2-\omega_0^2},\\
& 0=k^2,
\end{eqnarray}
which we will call the {\em b-mode}, the {\em transverse modes}, and the {\em free photon modes} respectively, we see that, given our signature
 for the metric, the right prescription for the free photon modes is $k^2\to k^2+i\varepsilon$, whereas
for the b-mode we can equivalently put $\omega^2\to \omega^2+i\varepsilon$ or $\omega_0^2\to \omega_0^2-i\varepsilon$. However, these two choices are not equivalent for the transverse modes and we now show how the latter choice is the right one.
\begin{lemma}
The right prescription for the correct propagation of all modes is $k^2\to k^2+i\varepsilon$ and $\omega_0^2\to \omega_0^2-i\varepsilon$.
\end{lemma}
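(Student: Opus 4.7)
The plan is to compare, for each of the three dispersion relations listed immediately above, how the two candidate prescriptions $\omega^2\to\omega^2+i\varepsilon$ and $\omega_0^2\to\omega_0^2-i\varepsilon$ displace the corresponding poles in the complex $k^0$ plane at fixed $\vec k$. The Feynman--St\"uckelberg criterion demands that positive-norm (positive-frequency) poles be shifted into the lower half-plane and negative-frequency ones into the upper half-plane, so that closing the $k^0$-contour for $x^0-y^0>0$ picks out only forward-propagating positive-norm states, consistently with the $\theta(x^0-y^0)$ split already displayed by the canonical expressions (\ref{AA})--(\ref{BB}).

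The free photon modes ($k^2=0$) are untouched by any shift in $\omega_0^2$, and the standard Feynman argument fixes $k^2\to k^2+i\varepsilon$ unambiguously. For the $b$-mode ($\omega^2-\bar\omega^2=0$) the two real roots at fixed $\vec k$ are $k^0=(\pm\bar\omega+\vec v\cdot\vec k)/v_0$, and a short direct check shows that both $\omega^2\to\omega^2+i\varepsilon$ and $\omega_0^2\to\omega_0^2-i\varepsilon$ displace them with the correct opposite-sign imaginary parts; hence the $b$-mode does not discriminate between the two options, and all the work is done by the transverse sector.

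The decisive, and technically most involved, part is therefore the transverse modes. Clearing denominators in $k^2/(4\pi)=g^2\chi\omega_0^2\omega^2/(\omega^2-\omega_0^2)$ yields a polynomial equation which, at fixed $\vec k$, has two positive solutions $\omega_{(1)}^2$ and $\omega_{(2)}^2$, the two polariton branches. I would use the implicit function theorem to determine the signs of $\partial\omega_{(a)}^2/\partial\omega_0^2$ and $\partial\omega_{(a)}^2/\partial\omega^2$ at these roots, and verify that $\omega_0^2\to\omega_0^2-i\varepsilon$ forces $\omega_{(a)}^2\to\omega_{(a)}^2-i\eta_a$ with $\eta_a>0$ for both $a=1,2$, thereby placing each branch's positive-frequency pole in the lower half-plane. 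The competing prescription $\omega^2\to\omega^2+i\varepsilon$, precisely because of the factor $(\omega^2-\omega_0^2)$ in the denominator of the dispersion relation, would drag one of the two branches in the wrong direction and generate negative-norm states propagating forward in time, in contradiction with the Fock structure of \cite{Quantum}. Establishing the correct sign of $\eta_a$ for both polariton branches simultaneously is the main obstacle of the proof and is what genuinely distinguishes the two prescriptions.

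Assembling the three analyses fixes $k^2\to k^2+i\varepsilon$ together with $\omega_0^2\to\omega_0^2-i\varepsilon$ as the unique consistent choice, which is the statement of the lemma. A natural consistency check, postponed to the proof of Proposition \ref{proposition} in the next appendix, is to insert this prescription into $\bar{\mathcal M}^{-1}_{i\varepsilon}$ and verify by residues in $k^0$ that the resulting Fourier integrals reproduce precisely the positive-frequency spectral decomposition (\ref{AA})--(\ref{BB}).
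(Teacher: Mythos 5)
Your overall strategy coincides with the paper's: at fixed $\vec k$ one tracks the $O(\varepsilon)$ displacement of each pole in the complex $k^0$ plane, observes that the free-photon poles and the $b$-mode poles do not discriminate between the two candidate shifts, and lets the transverse (polariton) sector decide. Those first two parts of your argument are fine and match the paper.

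The decisive step, however --- establishing the sign of the imaginary displacement of the transverse poles --- is only announced (``I would use the implicit function theorem\dots'') and, as you set it up, would not close the argument. First, under the full prescription the transverse poles are governed by $(k^2+i\varepsilon)(\omega^2-\omega_0^2+i\varepsilon)=4\pi g^2 \chi(\omega_0^2-i\varepsilon)\omega^2$, so all three $i\varepsilon$'s contribute to the displacement; computing $\partial\omega_{(a)}^2/\partial\omega_0^2$ alone omits the contributions from $k^2\to k^2+i\varepsilon$ and from the shifted denominator. Second, the correctness criterion for these modes is not ``positive frequency propagates forward'' but ``positive norm propagates forward'', and in the transverse sector the norm is encoded in the sign of $DR'_{(a)}(\vec k)$, the quantity weighting the polariton contributions in (\ref{AA})--(\ref{PP}). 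The paper sets $k^0_\varepsilon=k^0+ig(\vec k)\varepsilon+o(\varepsilon)$ and finds $g(\vec k)=-\frac 1{2DR'_{(a)}(\vec k)}\bigl[1+\frac {4\pi g^2\chi\omega_{(a)}^4}{(\omega_{(a)}^2-\omega_0^2)^2}\bigr]$; since the bracket is manifestly positive, the poles pushed into the lower half-plane are exactly those with $DR'_{(a)}>0$, and the lemma follows without ever determining the signs on the two branches separately. Your route instead requires proving both that $\eta_a>0$ for $a=1,2$ and that positive frequency coincides with positive norm on both polariton branches --- precisely the computation you yourself flag as ``the main obstacle'' and leave undone. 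As it stands, the proposal is a correct plan with its central verification missing.
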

\begin{proof}
In order to prove the lemma, let us notice that the propagation in time is provided by the phase factor 
\begin{eqnarray}
e^{-i\pmb k\cdot (\pmb x-\pmb y)}=e^{-ik^0(x^0-y^0)} e^{i\vec k\cdot (\vec x-\vec y)}.
\end{eqnarray}
For $x^0>y^0$, the $k^0$ path must be closed with negative imaginary part in order to apply correctly the residue theorem. So, the poles in $k^0$ with negative imaginary part will contribute to the integral. This means that are just the poles 
corresponding to positive norm states that must have negative imaginary part. This justifies the $i\varepsilon$ prescription for the free photon modes and for the b-mode. For the transverse modes, since in this case the positive norm states 
correspond to positive values of $DR'_{(a)}$, we must check that the solutions of 
\begin{eqnarray}
0=\frac {k^2+i\varepsilon}{4\pi} -\frac {g^2\chi (\omega_0^2-i\varepsilon)\omega^2}{\omega^2-\omega_0^2+i\varepsilon}
\end{eqnarray}
with negative imaginary part correspond exactly to the solution with positive $DR'_{(a)}$. To this end we write the equation in the form
\begin{eqnarray}
(k^2+i\varepsilon)(\omega^2-\omega_0^2+i\varepsilon)=4\pi g^2 \chi (\omega_0^2-i\varepsilon) \omega^2
\end{eqnarray}
and set $k^0_{\varepsilon}=k^0+ig(\vec k)\varepsilon+o(\varepsilon)$. After substitution we get immediately 
\begin{eqnarray}
g(\vec k)=-\frac {1}{2DR'_{(a)}(\vec k)}\left[ 1+\frac {4\pi g^2 \chi \omega_{(a)}^4}{(\omega_{(a)}^2-\omega_0^2)^2} \right],
\end{eqnarray}
which proves the lemma.
\end{proof}
Now we can proceed with the proof of the proposition. Following the notations of section \ref{sec:path integral}, the matrix $\mathcal M^{-1}_{i\varepsilon}$ is written as a $3\times 3$ block matrix, see (\ref{tilde M-1 reduc}). In this way we can
separate the proof in the following steps.
\subsection{(\ref{AA}).}
We will prove that
\begin{eqnarray}
G^{(\mu+1)(\nu+1)}(\pmb x,\pmb y)=\int_{\mathbb R^4} \frac {d^4\pmb k}{(2\pi)^4} {\mathcal M_{i\varepsilon}^{-1}(\pmb k)}^{1,1}e^{-i\pmb k \cdot (\pmb x-\pmb y)}=
\int_{\mathbb R^4} \frac {d^4\pmb k}{(2\pi)^4} {M_{i\varepsilon}^{-1}(\pmb k)} e^{-i\pmb k \cdot (\pmb x-\pmb y)},
\end{eqnarray}
where $1,1$ indicates the first ($4\times 4$) block. We need only to prove it for the inward propagation, that is when closing clockwise the path (when $x^0-y^0>0$). There are three kinds of contributions to the residua.
\subsubsection{The $\pmb{k^0=k^0_>}$ pole.} This gives the contribution to the b-mode. Looking at $M^{-1}_{i\varepsilon}$, we see that the polar part in $k^0_>$ is 
\begin{align}
&\left( -\frac {1}{(k^2)^2} \frac {4\pi (\omega^2-\omega_0^2)}{\omega^2-\bar \omega^2+i\varepsilon} +\frac {4\pi g^2 \chi \omega_0^2}{\left[ \frac {k^2}{4\pi}- \frac {g^2 \chi \omega_0^2 \omega^2}{\omega^2-\omega_0^2} \right] k^2}
\frac 1{\omega^2-\bar \omega^2+i\varepsilon} \right) k^\mu k^\nu\cr 
&-\frac {4\pi g^2 \chi \omega_0^2}{\left[ \frac {k^2}{4\pi}- \frac {g^2 \chi \omega_0^2 \omega^2}{\omega^2-\omega_0^2} \right] (\omega^2-\bar \omega^2+i\varepsilon)}
\left[ \frac \omega{k^2} (k^\mu v^\nu+k^\nu v^\mu) -v^\mu v^\nu \right].
\end{align}
In order to apply the residuum theorem, we note that for $\omega=\bar \omega$ we have
\begin{eqnarray}
\frac {k^2}{4\pi}- \frac {g^2 \chi \omega_0^2 \omega^2}{\omega^2-\omega_0^2}=\frac 1{4\pi} \left( k^2-\bar \omega^2 \right) 
\end{eqnarray}
which substituted above and remembering a $-2\pi i$ factor, reproduces exactly the fourth row of (\ref{AA}).
\subsubsection{The $\pmb{k^0=k^0_{(a)}}$ poles.} These give the contributions to the transverse modes. Near the pole $k^0=k^0_{(a)}$ the matrix $M^{-1}$ is nearly
\begin{align}
M^{-1}_{i\varepsilon}\simeq \frac 1{k^0-k^0_{(a)}+i\varepsilon} \frac 1{DR'_{(a)}(\vec k)} & \left[ 
\eta^{\mu\nu} +\frac {4\pi g^2 \chi \omega_0^2}{k_{(a)}^2 (\omega_{(a)}^2-\omega_0^2)} k_{(a)}^\mu k_{(a)}^\nu \right. \cr
&\left. -\frac {4\pi g^2 \chi \omega_0^2}{\omega_{(a)}^2-\omega_0^2}
\left( \frac {\omega_{(a)}}{k_{(a)}^2} (k_{(a)}^\mu v^\nu +k_{(a)}^\nu v^\mu) -v^\mu v^\nu \right)
\right].
\end{align}
By using that 
\begin{eqnarray}
&& \frac {\omega_{(a)}^2}{k_{(a)}^2}=\frac {\omega_{(a)}^2-\omega_0^2}{4\pi g^2 \chi \omega_0^2},\\
&& \frac {\omega_{(a)}^2}{k_{(a)}^2 }-1=\frac {\omega_{(a)}^2-\bar \omega^2}{4\pi g^2 \chi \omega_0^2}, 
\end{eqnarray}
we see that contracting with $v_\nu$ or $k_\nu$ we get zero, and, being $\pmb e_i^{(a)}$ spacelike and orthogonal to $\pmb v$ and to $\pmb k$, we get that
\begin{eqnarray}
M^{-1}_{i\varepsilon}\simeq -\sum_{i=1}^2 \frac {e^{(a)\mu}_i (\vec k) e^{(a)\nu}_i (\vec k)}{DR'_{(a)}(\vec k)} \frac 1{k^0-k^0_{(a)}+i\varepsilon}.
\end{eqnarray}
Summing up the contributions of both the poles $a=1,2$, and taking into account the factor $-2\pi i$ of the residua theorem, we get the last row of (\ref{AA}).
\subsubsection{The $\pmb{k^0=k^0_+}$ poles.} These give the contributions to the free photon modes. Since we have a second order pole, it is convenient to include the exponential factor in the polar part that is
\begin{align}
e^{-i\pmb k\cdot (\pmb x-\pmb y)}&\left\{ 
\left[
\frac {4\pi}{(k^2+i\varepsilon)^2} \left( \frac\xi{4\pi}- \frac {\omega^2-\omega_0^2}{\omega^2-\bar \omega^2} \right)
+\frac 1{ \frac {k^2}{4\pi}- \frac {g^2 \chi \omega_0^2 \omega^2}{\omega^2-\omega_0^2}}  \frac {g^2\chi \omega_0^2}{\omega^2-\bar\omega^2} \frac {4\pi}{k^2+i\varepsilon}
\right] k^\mu k^\nu
\right.\cr
&\left.
-\frac {g^2\chi \omega_0^2 \omega}{ \frac {k^2}{4\pi}- \frac {g^2 \chi \omega_0^2 \omega^2}{\omega^2-\omega_0^2}}  \frac {k^\mu v^\nu+k^\nu v^\mu}{\omega^2-\bar\omega^2}
\frac {4\pi}{k^2+i\varepsilon}
\right\}.
\end{align}
The residuum is thus
\begin{align}
&\frac 1{2\omega_+ k^0_+} \frac {e^{-i\pmb k_+\cdot(\pmb x-\pmb y)}}{\frac 1{4\pi}-\frac {g^2\chi\omega_0^2}{\omega^2-\omega_0^2}}(k^\mu v^\nu+k^\nu v^\mu-k^\mu k^\nu) 
+\frac d{dk^0} \left[ \frac {e^{-i\pmb k_+\cdot(\pmb x-\pmb y)}}{(k^0+|\vec k|)^2} \left( \xi- \frac 1{\frac 1{4\pi}-\frac {g^2\chi\omega_0^2}{\omega^2-\omega_0^2}} \right) k^\mu k^\nu \right]_{k^0=|\vec k|}\cr
&=\frac 1{2\omega_+ k^0_+} \frac {e^{-i\pmb k_+\cdot(\pmb x-\pmb y)}}{\frac 1{4\pi}-\frac {g^2\chi\omega_0^2}{\omega^2-\omega_0^2}}(k^\mu v^\nu+k^\nu v^\mu-k^\mu k^\nu) 
-i(x^0-y^0) \frac {e^{-i\pmb k_+\cdot(\pmb x-\pmb y)}}{4|\vec k|^2}  \left( \xi- \frac 1{\frac 1{4\pi}-\frac {g^2\chi\omega_0^2}{\omega^2-\omega_0^2}} \right) k^\mu k^\nu \cr
&\quad -\frac {e^{-i\pmb k_+\cdot(\pmb x-\pmb y)}}{4|\vec k|^3}  \left( \xi- \frac 1{\frac 1{4\pi}-\frac {g^2\chi\omega_0^2}{\omega^2-\omega_0^2}} \right) k^\mu k^\nu 
+\frac {e^{-i\pmb k_+\cdot(\pmb x-\pmb y)}}{4|\vec k|^2}  \left( \xi- \frac 1{\frac 1{4\pi}-\frac {g^2\chi\omega_0^2}{\omega^2-\omega_0^2}} \right) (\eta^{0\mu} k_+^\nu+\eta^{0\nu} k_+^\mu)\cr
& +\frac {e^{-i\pmb k_+\cdot(\pmb x-\pmb y)}}{4|\vec k|^2} \frac {2\omega_+ v^0 g^2\chi \omega_0^2}{(\omega_+^2-\omega_0)^2} \frac {k^\mu k^\nu}{\left[ \frac 1{4\pi}-\frac {g^2\chi\omega_0^2}{\omega^2-\omega_0^2} \right]^2}.
\end{align} 
Apparently, this expression does not reproduce the first three rows of (\ref{AA}). However, it is easy to see that these are reproduced integrating in the direction $k^\mu v_\mu$ in place of the direction $k^0$. This can be done by taking
a boost such that $v^\mu\to (1,\vec 0)$, integrating in the new $k^0$ direction and then going back to the original frame. This completes the proof of the first statement.
\subsection{(\ref{AP}).}
We will prove that
\begin{eqnarray}
G^{(\mu+1)(\nu+5)}(\pmb x,\pmb y)=\int_{\mathbb R^4} \frac {d^4\pmb k}{(2\pi)^4} {\mathcal M_{i\varepsilon}^{-1}(\pmb k)}^{1,2}e^{-i\pmb k \cdot (\pmb x-\pmb y)}=\int_{\mathbb R^4} \frac {d^4\pmb k}{(2\pi)^4} {N_{i\varepsilon}(\pmb k)}e^{-i\pmb k \cdot (\pmb x-\pmb y)}.
\end{eqnarray}
We need only to prove it for the inward propagation, that is when closing clockwise the path (when $x^0-y^0>0$). There are three kinds of contributions to the residua.
\subsubsection{The $\pmb{k^0=k^0_>}$ pole.} This gives the contribution to the b-mode. Looking at $N_{i\varepsilon}$, we see that the polar part in $k^0_>$ is 
\begin{align}
N_{i\varepsilon}\simeq -ig{2v^0} \frac {4\pi \chi \omega_0^2}{k_>^2-\bar \omega^2} \left( v^\mu v^\nu +\frac {k_>^\mu k_>^\nu}{k_>^2}-\frac {\bar \omega}{k^2_>} v^\mu k_>^\nu -\frac {k_>^\mu v^\nu}{\bar\omega} \right)
\frac 1{k^0-k^0_>+i\varepsilon},
\end{align}
which leads immediately to the second line of (\ref{AP}). 
\subsubsection{The $\pmb{k^0=k^0_{(a)}}$ poles.} These give the contributions to the transverse modes. Near the pole $k^0=k^0_{(a)}$ the matrix $N_{i\varepsilon}$ is nearly
\begin{align}
N_{i\varepsilon}\simeq - \frac {ig\chi \omega_0^2}{\omega_{(a)}^2-\omega_0^2} \frac {\omega_{(a)}}{DR'_{(a)}(\vec k) (k^0-k^0_{(a)}+i\varepsilon)} &\left[ \eta^{\mu\nu} +\frac {4\pi g^2 \chi \omega_0^2}{\omega_{(a)}^2-\omega_0^2}
\left( v^\mu v^\nu \right.\right. \cr
& \left.\left. +\frac {k_{(a)}^\mu k_{(a)}^\nu}{k_{(a)}^2} -\frac {\omega_{(a)}}{k^2_{(a)}} v^\mu k^\nu -\frac {\omega_{(a)}^2-\omega_0^2}{4\pi\omega_{(a)}g^2\chi \omega_0^2} k^\mu v^\nu\right) \right],
\end{align}
and using that 
\begin{eqnarray}
&& \frac {\omega_{(a)}^2}{k_{(a)}^2}=\frac {\omega_{(a)}^2-\omega_0^2}{4\pi g^2 \chi \omega_0^2},
\end{eqnarray}
we see that contracting with $v_\nu$ or $k_\nu$ we get zero. Since $\pmb e_i^{(a)}$ spacelike and orthogonal to $\pmb v$ and to $\pmb k$, we get that
\begin{eqnarray}
N_{i\varepsilon}\simeq ig\chi \omega_0^2 \sum_{i=1}^2 \frac {e^{(a)\mu}_i (\vec k) e^{(a)\nu}_i (\vec k)}{DR'_{(a)}(\vec k)} \frac {\omega_{(a)}}{\omega_{(a)}^2-\omega_0^2}\frac 1{k^0-k^0_{(a)}+i\varepsilon}.
\end{eqnarray}
Summing up the contributions of both the poles $a=1,2$, and applying the theorem of residues, we get the last row of (\ref{AP}).
\subsubsection{The $\pmb{k^0=k^0_+}$ poles.} These give the contributions to the free photon modes. We see that for $k^0\simeq k^0_+$
\begin{eqnarray}
N_{i\varepsilon}\simeq \frac i{g\omega_+} \frac {4\pi g^2 \chi \omega_0^2}{\omega_+^2-\omega_0^2} \frac 1{2|\vec k|} \left( k^\mu k^\nu-\omega_+ v^\mu k^\nu \right) \frac 1{k^0-k^0_++i\varepsilon}.
\end{eqnarray}
This leads immediately to the first row of (\ref{AP}).
\subsection{(\ref{AB}).}
We will prove that
\begin{eqnarray}
G^{(\mu+1)(9)}(\pmb x,\pmb y)=-i\int_{\mathbb R^4} \frac {d^4\pmb k}{(2\pi)^4} \frac {k^\mu}{k^2+i\varepsilon} e^{-i\pmb k \cdot (\pmb x-\pmb y)}.
\end{eqnarray}
We need only to prove it for the inward propagation, that is when closing clockwise the path (when $x^0-y^0>0$). In this case there is just one contribution, which corresponds to the pole $k^0=k^0_+-i\varepsilon$. Here, the direct computation
gives the right result.
\subsection{(\ref{PP}).}
We will prove that
\begin{eqnarray}
G^{(\mu+5)(\nu+5)}(\pmb x,\pmb y)=\int_{\mathbb R^4} \frac {d^4\pmb k}{(2\pi)^4} {\mathcal M_{i\varepsilon}^{-1}(\pmb k)}^{2,2}e^{-i\pmb k \cdot (\pmb x-\pmb y)}=\int_{\mathbb R^4} \frac {d^4\pmb k}{(2\pi)^4}  \left(Q^{\mu\nu}_{i\varepsilon}(\pmb k)
+\frac {\chi \omega_0^2}{\omega^2-\omega_0^2+i\varepsilon} \eta^{\mu\nu} \right)e^{-i\pmb k \cdot (\pmb x-\pmb y)}.
\end{eqnarray}
We need only to prove it for the inward propagation, that is when closing clockwise the path (when $x^0-y^0>0$). It is interesting to note that in this case we have three contribution, but the pole in $k^0_+$ is replaced by a pole
in $\omega=\omega_0$. This contribution corresponds to the solution $\pmb A=\pmb 0$, $B=0$, $\pmb P\propto \pmb v$, which must be discarded because of the condition $\pmb v \cdot \pmb P=0$.
\subsubsection{The $\pmb{k^0=k^0_>}$ pole.}
In this case
\begin{align}
Q_{i\varepsilon}\simeq \frac {\chi \omega_0^2}{2\bar \omega v^0} \frac 1{k_>^2-\bar \omega^2} \left( \bar \omega v^\mu- k_>^\mu \right)\left( \bar \omega v^\nu- k_>^\nu \right) \frac 1{k^0-k^0_>+i\varepsilon},
\end{align}
which, through the theorem of residues, leads to the first row of (\ref{PP}).
\subsubsection{The $\pmb{k^0=k^0_{(a)}}$ poles.}
Again, the polar part is in $Q_{i\varepsilon}$ only, and, as in the previous subsections, it is sufficient to check that $\pmb k$ and $\pmb v$ are in the kernel of the polar part of $Q_{i\varepsilon}$. But this is easily checked exactly in the same way as for the previous 
subsections.

\

\

\

\

\noindent Thus, we are left with the expressions (\ref{PB}) and (\ref{BB}), which, however, are trivially verified. Then, the proof of the proposition is complete.

\end{appendix}

\newpage

\end{document}